\numberwithin{equation}{section}
\theoremstyle{plain}
\newtheorem{theorem}{Theorem}[section]
\newtheorem{lemma}{Lemma}[section]
\newtheorem{corollary}{Corollary}[section]
\renewenvironment{proof}{\noindent{\it Proof.}}{\qed}
\newtheorem{proposition}{Proposition}[section]
\def\half{\hbox{$1\over2$}}
\def\qart{\hbox{$1\over4$}}
\def\bX{{\mathbb X}}
\newcommand{\BHV}{\textrm{BHV}}
\let\catsymbfont\mathcal
\newcommand{\aB}{{\catsymbfont{B}}}
\newcommand{\aC}{{\catsymbfont{C}}}
\begin{document}




\title{Testing to distinguish measures on metric spaces}
\vspace{1in}
\author{Andrew J. Blumberg}
\address{Department of Mathematics, University of Texas,
Austin, TX \ 78712}
\email{blumberg@math.utexas.edu}

\author{Prithwish Bhaumik}
\address{Data Scientist, Quantifid Inc., CA, 94025}
\email{prithwish1987@gmail.com}

\author{Stephen G. Walker}
\address{Department of Mathematics, University of Texas,
Austin, TX \ 78712}
\email{s.g.walker@math.utexas.edu}





\begin{abstract}
We study the problem of distinguishing between two distributions on a
metric space; i.e., given metric measure spaces $({\mathbb X}, d,
\mu_1)$ and $({\mathbb X}, d, \mu_2)$, we are interested in the problem
of determining from finite data whether or not $\mu_1$ is $\mu_2$.
The key is to use pairwise distances between observations and, employing a reconstruction
theorem of Gromov, we can perform such a test using a two sample Kolmogorov--Smirnov test.
A real analysis using phylogenetic trees and flu data is presented. 
\end{abstract}

\maketitle

{\sc Keywords}: Energy test; Fr\'echet mean; Kolmogorov--Smirnov two sample test; Reconstruction theorem.


\section{Introduction}
Statistical inference relies on the notion of observations  coming from some random mechanism taking the form of a probability model. In some instances the formulation of such a probability model is difficult and an illustration of this occurs when observations arise in the form of phylogenetic trees. To undertake statistical inference here one would need a probability model on the space of trees  with, say, $k$ leaves, and denoted by $\BHV_k$.  However, before one can contemplate how to proceed statistically it is necessary to determine a metric on this space.  A distance does exist which can be adequately used for statistical inference and this will be discussed later. For now we refer the reader to \cite{aldous1996}, \cite{aldous2001} and \cite{holmes2003} for background on statistical inference for phylogenetic trees. 

Much of statistical inference takes place in Euclidean spaces;
utilizing distances between parameters in $\mathbb{R}^d$, for
example. In such metric spaces many useful concepts such as centroids
and limit theorems exist. However, in other metric spaces, this is not
necessarily the case and in such circumstances statistical inference
becomes challenging. The particular task the present paper is
concerned with is two sample hypothesis testing. That is, given two
sets of randomly generated trees we wish to test that the two random
generating mechanisms are the same. This is different from the
hypothesis test considered in \cite{holmes2007}, who tested whether a
true phylogenetic tree, regarded as an unknown parameter, is included
in a specified set of trees or not. However, the theory we present
goes beyond this specific space of trees and applies to more general metric spaces. 

One such space, which we discuss only as a reference to provide illustration, arises
in nonparametric density estimation problems. One uses distances, or divergences, between density or distributon functions; such as the $L_1$ distance and Kullback--Leibler divergence for the former, and any distance metricizing weak convergence of distribution functions, for the latter. Though an atypical problem in statistics, one could ask whether two sets of randomly generated distribution functions, $F_1(t),\ldots,F_n(t)$, and $G_1(t),\ldots,G_n(t)$, share the same generating mechanism; i.e. to test $H_0:\mathcal{L}F=\mathcal{L}G$. For  a nonparametric test one is going to struggle to find a framework in which to conduct such a test. 

A parametric test could, for example, use the Dirichlet process (\cite{Ferguson1973}) as a foundation. The Dirichlet process measure for $F$ is fully specified by a scale parameter $c_F>0$ and a centering distribution $F_0$ for which 
$$\mathbb{E}F(t)=F_0(t)\quad\mbox{and}\quad\mbox{Var}F(t)=\frac{F_0(t)\{1-F_0(t)\}}{1+c_F}.$$
In this case we would look only to check whether the mean and variance of $F(t)$ matched those of $G(t)$ for each $t$. This would be however quite restrictive and miss other dissimilarities.

In order to undertake a nonparametric test we can utilize a metric $d_w$ metricizing weak convergence of distribution functions, such as the Prokhorov distance. From the observed sets of distribution functions we would compute the pairwise distances 
$$d_w(F_i,F_j)\quad\mbox{and}\quad d_w(G_i,G_j).$$
These $\half n(n-1)$ values from each set of distribution functions can then be used to construct a test. Indeed, \cite{gromov1987hyperbolic} has shown that if $M$ is the $n\times n$ matrix with entries $d_w(F_i,F_j)$ then the distribution of $M$ characterizes, up to an  isometry, the probability measure $\mu$ generating the $F$. The isometry, say $\tau$, would be such that $d_w(F_1,F_2)=d_w(\tau(F_1),\tau(F_2))$. This basic idea, which we adapt, can be used to perform a two sample test for $H_0:\mathcal{L}F=\mathcal{L}G$; though the caveat is that the power for testing $\mu$, which generates $F$, against $\mu_\tau$, which generates $G=\tau(F)$, will be no greater than the Type I error. However, we would not see this as a problem since the isometry alternatve is quite specific and unlikely to be present in practice; i.e. that $G$ is a specific deterministic transform of $F$.


The space we are primarily concerned with is the space of phylogenetic trees with $k$ leaves;
a non--Euclidean metric space.  The celebrated work of \cite{billera2001geometry}
constructs a metric space on $\BHV_k$.  The distance between two trees each with $k$ leaves is described in \cite{billera2001geometry}, and we refer the reader to this article for the details. 

While this space has a rich geometric structure, 
it is far from Euclidean.  
Performing statistical inference in such non--Euclidean metric spaces
is a problem of basic interest.  In this paper, we focus on the
problem of determining tests to distinguish finite samples $X$ and $Y$
drawn from different measures $\mu_1$ and $\mu_2$ on a metric space
$({\mathbb X},d)$.  We will assume that $\mu_1$ and $\mu_2$ are Borel
measures, ${\mathbb X}$ is a Polish space, and ${\mathbb X}$ is finite
with respect to $\mu_1$ and $\mu_2$, indeed $\mu_1(\mathbb{X})=\mu_2(\mathbb{X_2})=1$.  We approach the problem using
the intrinsic structure of the metric measure spaces $({\mathbb X}, d,
\mu_1)$ and $({\mathbb X}, d, \mu_2)$ following the geometric view  introduced by Gromov, \cite{gromov1987hyperbolic}.  Gromov's
``mm-reconstruction theorem'' shows that a metric measure space
$({\mathbb X}, d, \mu)$ is characterized up to isomorphism by the
induced distributions, via pushforward, under the
``distance matrix'' maps
\[
\underbrace{\bX \times \bX \times \ldots \times \bX}_{n} \to M_n,
\]
where $M_n$ denotes the set of positive symmetric $n \times n$
matrices and the map takes a set $(x_1,\ldots,x_n)$ of $n$ points to the matrix $M$
such that $M_{ij} = d(x_i, x_j)$.

Given two finite subsets $X, Y \subset {\mathbb X}$, we use distance matrix
distributions to specify and describe a non--parametric test for the hypothesis that $X$ and $Y$ were drawn from identical measures, up to an isometry, on ${\mathbb X}$.  
We validate our test using synthetic data, and also comparing with an alternative test, as well as using the test on antigenic flu data; coming from the National Centre for Biotechnology Information (NCBI).

The technique we use is based on that of Gromov. We obtain a modification of his result by characterizing a probability measure $\mu$ by the distribution of the stochastic process 
$$S(t)=\mu\left(x'\,:\, d(X,x')\leq t\right)\quad\mbox{with}\quad X\sim \mu.$$
With $n$ data points we can partially observe $n$ such processes and this will form the basis of the test.

An array of tests which could be employed are based on the so--called {\em Energy Statistics}; see \cite{szekely2013energy}. The energy distance between $X$ and $Y$ is defined as
$$\mathbb{D}(X,Y)=2\,{\mathbb E}(d(X,Y))-{\mathbb E}(d(X,X'))-{\mathbb E}(d(Y,Y'))$$
where $X'$ is an independent copy of $X$ and $Y'$ an independent copy of $Y$. Tests for $X=_d Y$ are therefore based on a sample estimate of $\mathbb{D}$ given by
$$\widehat{\mathbb{D}}=2\sum_{1\leq i,j\leq n}d(x_i,y_j)-\sum_{1\leq i,j\leq n}d(x_i,x_j)- \sum_{1\leq i,j\leq n}d(y_i,y_j).$$
This is using a fundamentally different strategy from the one we use.

There are at least three identified problems with using tests based on ${\mathbb D}(X,Y)$.  The first is that $(\bX,d)$ must be a Hilbert space (see \cite{lyons2013distance}) in order for the condition ${\mathbb D}(X,Y)=0\Longrightarrow X=_d Y$, which is obviously essential. Secondly, it has recently been shown that such tests do not have the sufficient power they were originally thought to have; see 
\cite{ramdas2015decreasing}. Finally, the two sample test requires boostrap and permutation techniques in order to implement; see Section 6 in \cite{szekely2013energy}.

Another idea  is to introduce a reference set $R$ and to consider a
test based on the values $d(r,x_i)$ and $d(r,y_i)$ for all $r\in
R$. See, for example, \cite{cheng2017}, though the authors are
primarily proposing a test on a high dimensional Euclidean space. Our
problem in the context of topological data would be how to choose $R$
in a meaningful way.

In section 2 we describe some theory which complements that of Gromov
and provides the basis for the test statistic. In Section 3 we illuminate the theory of Section 2 by showing how things look in a particular setting where we can identify objects quite easily. Section 4 presents
ilustrations including some real data analysis and section 5 concludes
with a brief discussion.

\section{Theory}

We adapt the basic technique of Gromov.  Note that Gromov's results
characterize the metric measure space only up to isomorphism; i.e.,
measure--preserving isometry.  In our setting, this minor
indeterminacy manifests itself via the fact that in order to test for
$\mu$ to be characterized by $S(\cdot)$, we must ensure the values of
$d(x,x')$ characterize the measure $\mu$; hence we assume that there
does not exist a measure-preserving isometry $\tau\colon
\bX\rightarrow \bX$.  For suppose there exists such a $\tau\colon
\bX\rightarrow \bX$ for which 
\begin{equation}
\label{condi}
\mu\left(s:d(x,s)\leq t\right)=\mu\left(s: d(\tau(x),\tau(s))\leq
t\right),
\end{equation}
for all $x$ and $t$.  Then based on values of $d(x,x')$, we cannot
infer $\mu$.

Since this hypothesis is unknowable in practice, we will view our
proposed test as testing for the hypothesis that the samples are drawn
from measures on $\bX$ which coincide after application of some such
$\tau$, notably including possibly the identity map.

Define $B_t(x)=\{x':d(x,x')\leq t\}$ and $\mu(B)$ is the mass assigned to the set $B$. Then we define the $[0,1]$ valued stochastic process $S_\mu(t)$, indexed by $t\geq 0$, as
$$S_\mu(t)=\mu(B_t(X))\quad\mbox{with}\quad X\sim \mu.$$
Hence, for example,
$${\mathbb E} \,S_\mu(t)=\int \mu(B_t(x))\,\mu(d x).$$

\begin{theorem}
The joint distribution of $(d(x,x_1),d(x,x_2),\ldots)$ with the $x,x_1,x_2,\ldots$ being i.i.d. from $\mu$ characterize the process $\{S_\mu(t)\}_{t\geq 0}$.
\end{theorem}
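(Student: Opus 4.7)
The plan is to show that, conditional on the initial point $x$, the sequence of distances $(d(x,x_1), d(x,x_2), \ldots)$ is an i.i.d. sample whose empirical distribution function reconstructs $S_\mu(\cdot)$ almost surely. Unconditioning on $x$ then produces the law of the process. I will carry this out in three steps.

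First, observe that for fixed $x \in \bX$, the random variables $d(x, x_i)$ are i.i.d.\ (because the $x_i$ are i.i.d.\ from $\mu$ and independent of $x$), each with distribution function
$$F_x(t) := \Pr(d(x, x_1) \leq t) = \mu(B_t(x)).$$
In particular, the process $t \mapsto F_x(t)$ is exactly $S_\mu(t)$ evaluated conditional on $X = x$, i.e., $S_\mu(t) \overset{d}{=} F_X(t)$ with $X \sim \mu$.

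Second, I would apply the Glivenko--Cantelli theorem. Define the (random) empirical distribution function
$$\widehat F_n(t) = \frac{1}{n} \sum_{i=1}^n \ono\{d(x, x_i) \leq t\}.$$
Conditional on $x$, these are empirical c.d.f.'s of i.i.d.\ samples from $F_x$, so
$$\sup_{t \geq 0} |\widehat F_n(t) - F_x(t)| \xrightarrow[n \to \infty]{\text{a.s.}} 0.$$
Consequently $F_x$, viewed as an element of the space of nondecreasing right-continuous functions on $[0,\infty)$ equipped with the sup-norm (or Skorokhod) topology, is a measurable functional of the tail $\sigma$-algebra of $(d(x,x_1), d(x,x_2), \ldots)$. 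Call this functional $\Phi$, so that $S_\mu(\cdot) = \Phi(d(x,x_1), d(x,x_2), \ldots)$ almost surely.

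Third, I would unravel the distributional statement. Since $\Phi$ is a measurable map from the Polish space of real sequences to the Polish space of càdlàg functions, the pushforward of the joint law of $(d(x,x_1), d(x,x_2), \ldots)$ under $\Phi$ is well defined, and this pushforward is precisely the law of the process $\{S_\mu(t)\}_{t \geq 0}$. Thus, any two measures $\mu, \mu'$ yielding the same joint distribution of $(d(x,x_1), d(x,x_2), \ldots)$ must yield the same law for $S_\mu$. The main (mild) obstacle is making the measurability of $\Phi$ and the choice of function space precise; this is handled by noting that $F_x$ is determined by its values at rationals (by right-continuity), and at each rational $t$ the map $(u_1, u_2, \ldots) \mapsto \lim_n \frac{1}{n}\sum_{i=1}^n \ono\{u_i \leq t\}$ is a Borel measurable function of the sequence, with the limit existing on a set of full measure by the strong law.
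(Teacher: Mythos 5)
Your proof is correct, but it takes a genuinely different route from the paper. The paper argues at the level of finite-dimensional distributions: it writes each joint moment $\mathbb{E}\prod_j S_\mu(t_j)^{r_j}$ as an integral of products of indicators $1\{d(x,x_j)\leq t_j\}$ over independent copies $x_1,\ldots,x_r$ (one copy per power), observes that these quantities are determined by the joint law of $(d(x,x_1),d(x,x_2),\ldots)$, and then invokes moment determinacy for the $[0,1]$-valued vector $(S_\mu(t_1),\ldots,S_\mu(t_m))$ together with Kolmogorov consistency to pass to the law of the whole process. You instead give a pathwise reconstruction: conditional on $x$ the distances $d(x,x_i)$ are i.i.d.\ with c.d.f.\ $F_x(t)=\mu(B_t(x))$, Glivenko--Cantelli recovers $F_x$ almost surely as a Borel functional $\Phi$ of the distance sequence (with measurability secured via limits at rational $t$ and right-continuous extension --- note $F_x$ is automatically right-continuous, being the c.d.f.\ of $d(x,x_1)$), and the law of $\{S_\mu(t)\}_{t\geq 0}$ is then literally the pushforward of the distance-sequence law under the fixed map $\Phi$, so equal sequence laws force equal process laws. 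Your approach buys two things: it yields a strictly stronger conclusion (the path of $S_\mu$ is a.s.\ recoverable from the sequence, not merely its law determined), which is conceptually closer to the paper's later remark that $n$ data points let one partially observe $n$ such processes; and it sidesteps the moment-determinacy step, which the paper asserts without justification (it is true for bounded random vectors, via Weierstrass approximation or the multivariate Hausdorff moment problem, but is a genuine ingredient). What the paper's route buys in exchange is brevity and the avoidance of any function-space and measurability bookkeeping: it never needs $\Phi$, the Skorokhod or sup-norm topology, or exceptional null sets, since it only ever manipulates finitely many coordinates at a time. Both arguments, it should be noted, genuinely require the full infinite i.i.d.\ sequence --- yours through the $n\to\infty$ limit, the paper's because arbitrarily high moments need arbitrarily many independent copies.
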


\begin{proof}
The moments
$$\mathbb{E}\prod_{j=1}^m S_\mu(t_j)^{r_j}=\int\ldots\int \prod_{j=1}^{r_1}1\{d(x,x_{j})\leq t_1\}\times\cdots\times$$
$$\prod_{j=r_1+\cdots+r_{m-1}+1}^{r}1\{d(x,x_{j})\leq t_m\}\,\mu(dx)\prod_{j=1}^r\mu(dx_j),$$
where $r=r_1+\cdots+r_m$, characterises the joint distribution of
$$(S_\mu(t_1),\ldots,S_\mu(t_m)),$$
for any choice of $(t_1,\ldots,t_m)$ under the measure $\mu$.
It is seen that this is characterized by the joint distribution of
$$D_\mu=\big(d(x,x_1),d(x,x_2),\ldots\big).$$
So this joint distribution characterizes the process $S_\mu(t)_{t\geq 0}$ through the finite dimensional distributions, which clearly satisfy the Kolmogorov consistency condition, thus completing the proof.
\end{proof}

\noindent
Hence, if $D_{\mu_1}=_d D_{\mu_2}$ for measures $\mu_1$ and $\mu_2$, i.e.
$$\big(d(x,x_1),d(x,x_2),\ldots\big)=_d \big(d(x',x_1'),d(x',x_2'),\ldots\big),$$
with $(x,x_1,x_2,\ldots)$ and $(x',x_1',x_2',\ldots)$ being i.i.d. from $\mu_1$ and $\mu_2$, respectively,  then
$$S_{\mu_1}(t)_{t\geq 0}=_d S_{\mu_2}(t)_{t\geq 0}.$$
That is, the two processes share the same distribution.

\begin{theorem}
\label{th:2}
Suppose the two processes $S_{\mu_1}(\cdot)$ and $S_{\mu_2}(\cdot)$ are equal in distribution and $\{\mu_j(B_t(x))\}_{x\in \bX}$ form a set of distinct paths. Also assume that $\mu_1$ and $\mu_2$ are distributions with densities $p_1$ and $p_2$, respectively. That is, for $j=1,2$.
$$\mu_j(B)=\int_B p_j(x)\,\lambda(d x)$$
for some measure $\lambda$.
If there is no non--trivial bijection $\tau:\bX\rightarrow \bX$ such that
$$\mu_1\left(s:d(x,s)\leq t\right)=\mu_1\left(s: d(\tau(x),\tau(s))\leq t\right)$$
for all $x$ and $t$, then $\mu_1(B)=\mu_2(B)$ for all metric balls $B$.
\end{theorem}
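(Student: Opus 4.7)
My plan is to promote the equality in distribution $S_{\mu_1}=_d S_{\mu_2}$ to an honest bijection $\tau\colon\bX\to\bX$ that automatically satisfies the isometry--type identity ruled out in the hypothesis, thereby forcing $\tau$ to be trivial and hence $\mu_1=\mu_2$.

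First I would introduce the path maps $\Phi_j\colon\bX\to[0,1]^{[0,\infty)}$ given by $\Phi_j(x)(t)=\mu_j(B_t(x))$. The distinct--paths hypothesis is precisely the statement that each $\Phi_j$ is injective. Equality in distribution of the processes $S_{\mu_j}$, read through their finite--dimensional distributions exactly as in the proof of Theorem~2.1 and Kolmogorov consistency, translates to equality of the two pushforward Borel measures on path space:
$$(\Phi_1)_{*}\mu_1=(\Phi_2)_{*}\mu_2.$$

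Next, since $\bX$ is Polish and $\Phi_1,\Phi_2$ are Borel injections, the Kuratowski measurable--inverse theorem lets me define $\tau:=\Phi_2^{-1}\circ\Phi_1$ as a Borel isomorphism between a $\mu_1$--conull Borel subset of $\bX$ and a $\mu_2$--conull one, and then extend it to an honest Borel bijection $\tau\colon\bX\to\bX$. By construction $\Phi_1(x)=\Phi_2(\tau(x))$ for $\mu_1$--a.e.\ $x$, that is,
$$\mu_1(B_t(x))=\mu_2(B_t(\tau(x))),\qquad t\geq 0,$$
and factoring $\Phi_1=\Phi_2\circ\tau$ in the pushforward identity, together with injectivity of $\Phi_2$, yields $\tau_{*}\mu_1=\mu_2$. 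Substituting the latter into the displayed equation gives
$$\mu_1(\{s:d(x,s)\leq t\})=\mu_1(\{s:d(\tau(x),\tau(s))\leq t\})$$
for $\mu_1$--a.e.\ $x$ and every $t\geq0$, which is exactly the relation forbidden by the hypothesis for any non--trivial bijection $\tau$.

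The hypothesis therefore forces $\tau$ to be the identity, whence $\mu_2=\tau_{*}\mu_1=\mu_1$, and in particular $\mu_1(B)=\mu_2(B)$ on every metric ball $B$. The main obstacle I anticipate is the measure--theoretic bookkeeping in upgrading the almost--everywhere--defined $\tau$ to a genuine bijection of all of $\bX$, since the no--isometry hypothesis is stated for honest bijections rather than for equivalence classes modulo null sets; this is precisely where the assumption that $\mu_1$ and $\mu_2$ have densities with respect to a common reference measure $\lambda$ does its work, by letting one reconcile the two exceptional sets and extend $\tau$ on them without perturbing any of the ball--mass expressions appearing in the hypothesis.
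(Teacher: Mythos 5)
Your proposal is correct and takes essentially the same route as the paper's proof: both use the distinct-paths hypothesis together with $S_{\mu_1}=_d S_{\mu_2}$ to produce a bijection $\tau$ satisfying $\mu_1(B_t(x))=\mu_2(B_t(\tau(x)))$ and $\tau_{*}\mu_1=\mu_2$ (the paper's $P_1\{x\}=P_2\{\tau(x)\}$ in the discrete case), substitute the second identity into the first to obtain the forbidden relation, and conclude that $\tau$ must be trivial so that $\mu_1$ and $\mu_2$ agree on metric balls. You merely make explicit---via path-space pushforwards and a Kuratowski-type argument, where one should restrict the monotone right-continuous paths $t\mapsto\mu_j(B_t(x))$ to rational $t$ so the path space is standard Borel---the existence of $\tau$ that the paper simply asserts, and you honestly flag the almost-everywhere-versus-everywhere extension issue that the paper's proof also glosses over.
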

\begin{proof}
For ease of exposition we first assume that $\lambda$ is a discrete measure and that
$P_j\{x\}$ is the mass assigned to $x$ for measure $\mu_j$. Assume that $\mu_1\ne\mu_2$ in that there exist some metric ball $B$ such that $\mu_1(B)\ne \mu_2(B)$. 
Given that $\{\mu_1(B_t(x))\}$ form a set of distinct paths in $t$ for each $x$, and similarly for $\mu_2$,
there exists a non--trivial bijection $\tau:\bX\rightarrow \bX$ such that for all $x\in \bX$ and all $t\geq 0$,
$\mu_1(B_t(x))=\mu_2(B_t(\tau(x)))$ and $P_1\{x\}=P_2\{\tau(x)\}$. These statements follow from the assumption $S_{\mu_1}=_d S_{\mu_2}$.

Therefore, from the former of these statements we have
$$\sum_{d(x,s)\leq t} P_1\{s\}=\sum_{d(\tau(x),s)\leq t}P_2\{s\}$$
and the right side can be written as
\begin{equation}
\label{key}
\sum_{d(\tau(x),\tau(s))\leq t}P_2\{\tau(s)\}
\end{equation}
by a simple transformation.
Now using the latter of the statements, (\ref{key}) is given by
$$\sum_{d(\tau(x),\tau(s))\leq t} P_1\{s\}$$
and hence for all $t$,
$\mu_1\left(s:d(x,s)\leq t\right)=\mu_1\left(s: d(\tau(x),\tau(s))\leq t\right),$
contradicting the assumption in the statement of the theorem; hence,
for all metric balls $B$, $\mu_1(B)=\mu_2(B)$. 

The argument for other measures $\lambda$ follows similarly. For now we have the two results $\mu_1(B_t(x))=\mu_2(B_t(\tau(x)))$ and that if $x\sim\mu_1$ then $\tau(x)\sim \mu_2$. Hence, from the former result we obtain
$$\mathbb{E}_{s\sim\mu_1}{\bf 1}(d(x,s)\leq t)=\mathbb{E}_{s\sim\mu_2} {\bf 1}(d(\tau(x),s)\leq t)$$
and from the latter that
$$\mathbb{E}_{s\sim\mu_2} {\bf 1}(d(\tau(x),s)\leq t)=\mathbb{E}_{s\sim\mu_1} {\bf 1}(d(\tau(x),\tau(s)\leq t),$$
where ${\bf 1}$ denotes the usual indicator function.
\end{proof}

There is no asymmetry here in the use of $\mu_1$ in the statement of the thereom; for if no $\tau$ exists for $\mu_1$ and $S_{\mu_1}=_dS_{\mu_2}$, then no $\tau$ exists for $\mu_2$ either.

We now show that the conclusion of Theorem \ref{th:2} implies
$\mu_1\equiv \mu_2$.  The argument we give is standard; e.g.,
see~\cite{buet2016} for a more comprehensive discussion.
Throughout, we work with a fixed metric space $(\bX, d)$.  Let 
$\aC$ denote a collection of subsets of $\bX$.  For a given Borel
measure $\mu$ on $(\bX,d)$ we define an outer measure $\mu_C$
on $\bX$ by the formula
\[
\mu_C(A) = \inf_{\{C_i\} \subseteq \aC, A \subseteq \bigcup_i C_i} \sum_i \mu(C_i).
\]
Here we are considering only countable covers $\{C_i\}$ of $A$.

Let $\aB_{\epsilon}$ denote the collection of metric balls in $\bX$ of
radius $< \epsilon$.  We define a metric outer measure by the formula
\[
\mu^* (A) = \sup_{\epsilon \to 0} \mu_{\aB_{\epsilon}}(A).
\]
Note that for $\epsilon_1 < \epsilon_2$,
$\mu_{\aB_{\epsilon_1}}(A) \geq \mu_{\aB_{\epsilon_2}}(A)$.  Associated
to a metric outer measure is a Borel measure; we will abusively denote
this by $\mu^*$ as well.

Now, the hypothesis we are working with is that we have two measures
on $(\bX,d)$, $\mu_1$ and $\mu_2$, such that for any metric
ball $B$ we have the equality $\mu_1(B) = \mu_2(B)$.  Therefore, the
metric outer measures and associated Borel measures $\mu_1^*$ and
$\mu_2^*$ coincide.

We would like to specify conditions under which the original measures
$\mu_1$ and $\mu_2$ must coincide.  We will do this by providing
conditions under which $\mu_1 = \mu_1^*$ and $\mu_2 = \mu_2^*$; that
is, conditions under which a measure $\mu$ is determined by its values
on metric balls.

\begin{lemma}
For any metric measure space $(\bX, d, \mu)$ and subset
$A \subseteq \bX$, we have
\[
\mu(A) \leq \mu^*(A).
\]
\end{lemma}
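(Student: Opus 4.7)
The inequality should follow almost immediately from countable subadditivity of $\mu$ applied to covers by metric balls, so the plan is essentially to unwind the definitions.

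First I would fix $\epsilon > 0$ and consider the family of countable covers $\{C_i\} \subseteq \aB_\epsilon$ with $A \subseteq \bigcup_i C_i$. For any such cover, since $\mu$ is a Borel measure on $(\bX,d)$ and each $C_i$ is a metric ball (hence Borel), countable subadditivity gives
\[
\mu(A) \;\leq\; \mu\!\left(\bigcup_i C_i\right) \;\leq\; \sum_i \mu(C_i).
\]
Taking the infimum of the right-hand side over all such covers yields $\mu(A) \leq \mu_{\aB_\epsilon}(A)$. (If no countable cover of $A$ by balls of radius $<\epsilon$ exists, the infimum is $+\infty$ by the usual convention, and the inequality is trivial.)

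Next, since $\mu(A)$ does not depend on $\epsilon$, I would take the supremum as $\epsilon \to 0$ of the inequality $\mu(A) \leq \mu_{\aB_\epsilon}(A)$, obtaining
\[
\mu(A) \;\leq\; \sup_{\epsilon \to 0} \mu_{\aB_\epsilon}(A) \;=\; \mu^*(A),
\]
which is the desired conclusion.

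There is no real obstacle here; the only points that need a little care are (i) ensuring $\mu(A)$ is well-defined (which one handles by working with Borel sets, or by interpreting $\mu(A)$ as the outer measure induced by $\mu$ in the standard way) and (ii) the vacuous case noted above when $A$ admits no countable cover by small balls. The heart of the argument is simply countable subadditivity, applied uniformly in $\epsilon$.
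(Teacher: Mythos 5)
Your proof is correct and follows essentially the same route as the paper, which also deduces $\mu(A) \leq \mu_{\aB_\epsilon}(A)$ from approximation by covers of $A$ by metric balls (i.e., countable subadditivity) and then passes to the limit in $\epsilon$. You have simply spelled out the subadditivity step and the vacuous-cover convention that the paper leaves implicit.
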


\begin{proof}
This follows from the fact that for any fixed $\epsilon$ we can
approximate $\mu_{\aB_{\epsilon}}(A)$ arbitrarily closely by taking a
cover of $A$ by metric balls.
\end{proof}

In order to show that $\mu^*(A) \leq \mu(A)$, we need a hypothesis
on $\mu$.  All of the relevant hypotheses amount to control on the
approximation of an arbitrary set by metric balls, as one would
expect.  Here is a fairly common hypothesis that suffices: Recall that
a measure is doubling if there exists a finite constant $k > 0$ such
that $\mu(B_{2\epsilon}(x)) \leq k \mu(B_\epsilon(x))$ for all
$\epsilon > 0$ and $x \in \bX$.  We need the following standard result
about doubling measures.

\begin{proposition}
Let $(\bX,d, \mu)$ be a metric measure space with $\mu$ a
doubling measure.  For any open $A \in \bX$, there exists a countable
collection of pairwise disjoint balls $\{B_i\}$ such that $\bigcup_i
B_i \subseteq A$ and $\mu(A \backslash \bigcup_i B_i) = 0$.
\end{proposition}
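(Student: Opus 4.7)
The plan is to execute a standard Vitali-type greedy construction, deploying the doubling hypothesis precisely at the step where a cover by $5$-times enlargements is converted into a bound in terms of the selected disjoint balls.

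First, I would reduce to the case $\mu(A) < \infty$ by exhausting $A$ by a countable increasing union of open subsets of finite measure; a doubling measure on a Polish space is $\sigma$-finite on bounded sets, so it suffices to prove the result on each piece and assemble. Then form the family $\mathcal{F} = \{B_r(x) : x \in A,\ B_r(x) \subseteq A\}$; since $A$ is open, for each $x \in A$ and each $\varepsilon > 0$ there is a member of $\mathcal{F}$ of radius less than $\varepsilon$ containing $x$. Extract $\{B_n\}$ greedily: at stage $n$, let $R_n = \sup\{r : B_r(x) \in \mathcal{F},\ B_r(x) \cap B_i = \emptyset \text{ for } i < n\}$, and select $B_n = B_{r_n}(x_n) \in \mathcal{F}$ disjoint from the previously chosen balls with $r_n > R_n / 2$. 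By construction the $B_n$ are pairwise disjoint and contained in $A$, so $\sum_n \mu(B_n) \leq \mu(A) < \infty$; in particular $\mu(B_n) \to 0$, and by the doubling inequality applied in reverse to balls inside a bounded region, $r_n \to 0$ as well.

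The key step is to show $\mu\bigl(A \setminus \bigcup_n B_n\bigr) = 0$. Given $x \in A \setminus \bigcup_{i=1}^N B_i$, openness of $A$ supplies a ball $B = B_r(x) \in \mathcal{F}$ disjoint from $B_1, \ldots, B_N$; since $r_n \to 0$ this $B$ must intersect some $B_i$ with $i > N$. Taking the least such $i$, one has $r \leq R_{i-1} < 2 r_i$, and then $B \cap B_i \neq \emptyset$ combined with the triangle inequality forces $B \subseteq 5 B_i$ (the ball concentric with $B_i$ of five times the radius). Hence $A \setminus \bigcup_{i=1}^N B_i \subseteq \bigcup_{i > N} 5 B_i$, and iterating the doubling property (three times, since $5 < 2^3$) yields $\mu\bigl(A \setminus \bigcup_{i=1}^N B_i\bigr) \leq k^3 \sum_{i > N} \mu(B_i)$; the right-hand side vanishes as $N \to \infty$ because the total series converges. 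The only use of the doubling hypothesis is the inequality $\mu(5 B_i) \leq k^3 \mu(B_i)$; everything else is purely metric geometry. The main obstacle, modest as it is, is the bookkeeping in the greedy selection: ensuring that $R_n$ is well-defined and finite (which is what the reduction to $\mu(A) < \infty$ on a bounded piece buys us) and correctly identifying the first $B_i$ with which a putative uncovered ball intersects.
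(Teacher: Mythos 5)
The paper itself gives no proof of this proposition: it is invoked as ``a standard result about doubling measures'' (with the surrounding discussion pointing to the covering-theorem literature), so your attempt can only be measured against the standard arguments. Your core construction is the classical Vitali greedy selection, and it is essentially sound: choose $r_n > R_n/2$, observe that an admissible ball $B=B_r(x)$ disjoint from $B_1,\dots,B_N$ must meet a first later ball $B_i$, conclude $B \subseteq 5B_i$, and kill the remainder with $\mu(5B_i) \le k^3 \mu(B_i)$ and the tail of the convergent series $\sum_i \mu(B_i) \le \mu(A)$. Two local remarks. First, a small index slip: since $B$ is disjoint from $B_1,\dots,B_{i-1}$, it competes at stage $i$, so the correct chain is $r \le R_i < 2r_i$; your $r \le R_{i-1}$ does not combine with $r_i > R_i/2$ to give $r < 2r_i$ (though $r \le R_i$ holds anyway, so the fix is immediate). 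Second, in this paper $\mu$ is a probability measure, so $\mu(A) < \infty$ is automatic and the $\sigma$-finiteness discussion is superfluous; the real work done by your reduction is \emph{boundedness} of the piece, which you correctly identify as what keeps $R_n$ finite and, more importantly, what yields the uniform lower bound $\mu(B_\delta(x_n)) \ge c(\delta) > 0$ converting $\sum_n \mu(B_n) < \infty$ into $r_n \to 0$. This is not cosmetic: in an unbounded space (and $\BHV_k$ is unbounded) a doubling probability measure can assign measures tending to zero to disjoint balls of fixed radius marching off to infinity, and then the greedy process as specified may forever select such far-away balls and never engage the region near $B$, leaving a remainder of positive measure.

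The one genuine gap is the ``assemble'' step across the exhaustion $A = \bigcup_m A_m$, which you pass over in a clause but which does not work as stated. Unioning the ball families from different pieces destroys pairwise disjointness, and you cannot instead re-apply the bounded case to the leftover $A_{m+1} \setminus \bigcup_i B_i$, because removing \emph{countably} many closed balls (the paper's balls $B_t(x)=\{x': d(x,x')\le t\}$ are closed) from an open set does not leave an open set, so the inductive hypothesis is unavailable. The standard repair is to take only \emph{finitely} many balls per stage: on the current open bounded piece your argument produces a countable disjoint family capturing almost all of the measure, so truncate it to a finite subfamily capturing all but $2^{-m}$; the complement of finitely many closed balls is open, so the iteration can continue on the next piece, and the total uncovered mass is controlled by $\mu(A \setminus A_M) + 2^{-m}$, which tends to $0$. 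Equivalently, one can run the usual proportional-capture iteration directly: by the $5r$-covering lemma and doubling, each round of finitely many disjoint balls inside the current open set captures at least a fixed fraction $1/(2k^3)$ of its measure, and geometric decay of the remainder finishes the proof. With either modification your argument closes up; without one, the reduction step would fail.
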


We can now prove that when $(\bX, d, \mu)$ is a metric
measure space with $\mu$ a doubling measure,
$\mu^*(A) \leq \mu(A)$ for all $A \subseteq \bX$.  It suffices to
consider $A$ open.  Furthermore, it is clear that $\mu^*$ and
$\mu_{\aB_\epsilon}$ are also doubling measures if $\mu$ is, and so
for $\epsilon > 0$ we apply the proposition to $\mu_{\aB_\epsilon}$
and conclude that  
\[
\mu_{\aB_\epsilon}(A) = \mu_{\aB_\epsilon}\left(\bigcup_i B_i\right) \leq
\sum_i \mu(B_i) \leq \mu(A).
\]
The desired inequality now follows by letting $\epsilon$ go to zero.

\section{Theory on space of normal density functions}

In this section we provide an illustration of the theory in Section 2 so as to expose some of the key points in perhaps more familiar statistical territory. Though we need point out statistical testing in the environment of normal density functions to be described would not be done this way; we use it merely to illuminate Section 2. We let $\mathbb{X}$ be the non--Euclidean space of normal density functions and $d$ is the Hellinger distance. So $\mathbb{X}=\{\mbox{N}(\cdot|\mu,\sigma^2)\}$ and 
$$d^2((\mu_1,\sigma_1),(\mu_2,\sigma_2))=1-\sqrt{\frac{2\sigma_1\sigma_2}{\sigma_1^2+\sigma_2^2}}\,\exp\left\{-\qart \frac{(\mu_1-\mu_2)^2}{\sigma_1^2+\sigma_2^2}\right\}.$$
For further simplicity, let us assume the normal densities both have mean $0$, so 
$$d^2(\sigma_1,\sigma_2)=1-\sqrt{\frac{2\sigma_1\sigma_2}{\sigma_1^2+\sigma_2^2}}.$$
A distance preserving bijective isometry exists here; it is $\tau(\sigma)=1/\sigma$, and it is easy to see that
$d(\sigma_1,\sigma_2)=d(\tau(\sigma_1),\tau(\sigma_2))$.  Hence, our test in this case would not be able to distinguish between the densities $f(\sigma)$ and $g(\sigma)=f(\tau(\sigma))\,|\tau'(\sigma)|$.
To elaborate, from $f$ we would have distances $d(\sigma_i,\sigma_j)$ and, for example, the distances from $g$ would be $d(1/\sigma_i,1/\sigma_j)=d(\sigma_i,\sigma_j)$ so we would conclude they were from the same source.
However, we would not see this as a problem, given the very specific nature of the $f$ and $g$. 

To consider the stochastic process $S_\mu(t)$, let us consider $d(\sigma,x)<t$, with $0<t<1$. This becomes
$$1-\sqrt{\frac{2\sigma x}{\sigma^2+x^2}}<t^2\Rightarrow \frac{2\sigma x}{\sigma^2+x^2}>s^2$$
where $s=1-t^2$. Hence
$$\mu(B_{t}(x))=\mbox{P}_\sigma\left(x/s^2-x\sqrt{1/s^2-1}<\sigma<x/s^2+x\sqrt{1/s^2-1}\right).$$
If, for example, $P_\sigma$ is standard exponential, then
$$\mu(B_t(x))=2\,\exp\left(-x/s^2\right)\sinh\left(x\sqrt{1/s^2-1}\right)$$
and $S_\mu(t)$ is the random path as a function of $t$ from 0 to 1, and with the $x$ chosen from $P_\sigma$. 
Here we see, for example, that for different $x$, the paths $\mu(B_t(x))$ are different.

However, we note in this case that $S_\mu(t)$ does not characterize $\mu$ due to the isometry.

To motivate the test described in the paper we would be testing $P_{\sigma_1}\equiv P_{\sigma_2}$ where we observe $X_i=\mbox{N}(\cdot|0,\sigma_{1i}^2)$ and $Y_i=\mbox{N}(\cdot|0,\sigma_{2i}^2)$ for $i=1,\ldots,n$, with $\sigma_{1i}\sim_{iid}P_{\sigma_1}$ and $\sigma_{2i}\sim_{iid}P_{\sigma_2}$. If we knew that the Fr\'echet means were both, say $\mbox{N}(\cdot|0,\sigma_0^2)$, with $\sigma_0$ known, then we would compare the samples
$$\left(d(\mbox{N}(\cdot|0,\sigma_0^2),X_i)\right)_{i=1}^n  \quad\mbox{and}\quad \left(d(\mbox{N}(\cdot|0,\sigma_0^2),Y_i)\right)_{i=1}^n$$
and use a Kolmogorov--Smirnov two sample test. This makes sense because
$$d(\mbox{N}(\cdot|0,\sigma_0^2),X)=_d d(\mbox{N}(\cdot|0,\sigma_0^2),Y)$$
is equivalent to
$$\frac{2\sigma_0\sigma_1}{\sigma_0^2+\sigma_1^2}=_d \frac{2\sigma_0\sigma_2}{\sigma_0^2+\sigma_2^2}.$$
This implies that
$$P(\sigma_1\in B(u))=P(\sigma_2\in B(u))\quad \mbox{for all}\quad u\in(0,1)$$
where
$$B(u)=\left(\sigma_0/u-\sigma_0\sqrt{1/u^2-1},\sigma_0/u+\sigma_0\sqrt{1/u^2-1}\right),$$
which implies $\sigma_1=_d\sigma_2$. This does not hold, for we can have $\sigma_2=1/\sigma_1$, when $\sigma_0=1$. That is, under this special existence of the known Fr\'echet mean, the appearance of the isometry arises when $\sigma_0=1$ only.

Note here that the Fr\'echet mean is given by the $\sigma_0$ which minimizes
$$\int d(\mbox{N}(\cdot|0,\sigma_0^2),\mbox{N}(\cdot|0,\sigma^2))\,dP(\sigma).$$
If a Fr\'echet mean does exist and is unique for each sample, but is unknown, then it can be estimated from the data. Indeed, we estimate the mean from the $X$ sample as $X_{i_X}$ which minimizes, over $i=1,\ldots,n$, 
$$\sum_{j\ne i}d (X_i,X_j).$$
A similar strategy is used to get the Fr\'echet mean for the $Y$ sample. The validity of a Kolmogorov--Smirnov test with a reduced degree of freedom, and with samples, 
$$(d(X_{i_X},X_j))_{j\ne i_X}\quad\mbox{and}\quad (d(Y_{i_Y},Y_j))_{j\ne i_Y},$$
is now provided by the theory in Section 2. The reasoning is that up to an isometry, the rows from $M_X$; i.e. $(d(X_i,X_j))_{j\ne i}$ all characterize $P_{\sigma_1}$ asymptotically. We use the minimum sum row in order to replicate as close as possible the case with the known Fr\'echet mean.

The samples being compared are
$$\left(\frac{\sigma_{1\,i_X}\sigma_{1j}}{\sigma_{1\,i_X}^2+\sigma_{1j}^2}\right)_{j\ne i_X}\quad\mbox{and}\quad 
\left(\frac{\sigma_{2\,i_Y}\sigma_{2j}}{\sigma_{2\,i_Y}^2+\sigma_{2j}^2}\right)_{j\ne i_Y}.$$
The first set yields an empirical distribution, say $F_X$, and the second set an empirical distribution, say $F_Y$. We then test for $F_X\equiv F_Y$ using the two sample Kolmogorov--Smirnov test.

Aside from the isometry, $\sigma_2=1/\sigma_1$, if these two samples pass the two sample Kolmogorov--Smirnov test with $n-1$ samples, then we do not reject the hypothesis $P_{\sigma_1}=P_{\sigma_2}$. 

\section{Test and Illustrations}

Suppose we have two samples $(X_1,X_2,\ldots,X_n)$ and $(Y_1,Y_2,\ldots,Y_n)$ from $\mu_1$ and $\mu_2$, respectively. Now if $\mu_1$ and $\mu_2$ are identical, then for each $i$ and $j$ , the 
distributions of 
$$d(X_i,X_k)_{k\ne i}\quad\mbox{and}\quad d(Y_i,Y_k)_{k\ne i}$$
are also identical. Moreover, from the theory in Section 2, we have that if the asymptotic sequences share the same distribution then $\mu_1$ and $\mu_2$ are the same. Hence, our proposed test is to select the two sets of values using carefully chosen $i=i_X$ and $i=i_Y$ and to then perform a two sample Kolmogorov--Smirnov test; see for example \cite{corder2014nonparametric}. Our choices for $i_X$ and $i_Y$ are
$$i_X=\arg\min_i \sum_{k=1}^n d(X_i,X_k) \quad\mbox{and}\quad i_Y=\arg\min_i \sum_{k=1}^n d(Y_i,Y_k), $$
respectively. This ensures we are comparing like rows from each matrix $M_X$ and $M_Y$ and moreover the choice $X_{i_X}$ and $Y_{i_Y}$ can be seen as a best representative of a location for the measures in that they are similar to Fr\'echet means. More on this in the discussion in Section 5.

\subsection{Comparison with synthetic data}

To compare  our test with another, we use a test statistic from the distance based test given in \cite{szekely2013energy}, and known as the energy test. The test statistic is given by
$$\widehat{\mathbb{D}}=n^{-2}\,\left[2\sum_{1\leq i,j\leq n}d(X_i,Y_j)-\sum_{1\leq i,j\leq n}d(X_i,X_j)-\sum_{1\leq i,j\leq n}d(Y_i,Y_j)\right].$$
This is currently a popular choice for comparing two data sets. However, there is a problem in that the critical value will depend on the distributions of $X$ and $Y$ and hence a bootstrap procedure is required in order to  
complete the test. 

To implement the two sample energy test we use bootstrap methods; since the distribution of the null hypothesis that $X$ and $Y$ come from the same source actually depends on that source. Hence, to get a critical value we compute $\widehat{\mathbb{D}}^{(b)}$, which is obtained by  randomly splitting the vector $(X,Y)$ into two equally sized sets and computing the energy statistic based on such a partition. The 95\% quantile value from the $(\widehat{\mathbb{D}}^{(1)},\ldots,\widehat{\mathbb{D}}^{(B)})$, for a large $B$, serves as the critical value, written as $c$. We then compute $\widehat{\mathbb{D}}$ and reject the hypothesis for $X$ and $Y$ coming from the same source if $\widehat{\mathbb{D}}>c$.  

We compare this with the conditional Kolmogorov--Smirnov test.  Hence, we use the test statistic
$$T=\sup_s |F_{X}(s)-F_Y(s)|$$
where $F_X$ is the empirical distribution of the $(d(X_{i_X},X_i))_{i\ne i_X}$ and $F_Y$ the corresponding empirical distribution function of the $(d(Y_{i_Y},Y_i))_{i\ne i_Y}.$ Here we reject the null hypothesis at the 95\% level of significance if $T>1.36\sqrt{2/(n-1)}$, in keeping with the Kolmogorov--Smirnov two--sample test theory.

\begin{center}
\begin{table}
\begin{center}
\begin{tabular}{ccc}
\hline
$\sigma^2$  & $T$-test power & $\widehat{\mathbb{D}}$--test  power \\\hline
1.2  & 0.06 &  0.06 \\
1.4  & 0.20 &  0.15 \\
1.6  &  0.42&  0.33 \\
1.8  &  0.61&  0.43 \\
2.0  &  0.81 &  0.70 \\
2.2  &  0.90&  0.84 \\
2.4  &  0.95 &  0.92 \\\hline
\end{tabular}
\caption{Power comparison for $T$-tests and $\widehat{\mathbb{D}}$--tests}
\end{center}
\end{table}
\end{center}

To undertake an initial simulation study we take the $(X_i)$ as independent and identically distributed from the standard normal distribution and take the $(Y_i)$ as independent normal with mean 0 and variance $\sigma^2$. We then, in the case of the energy test,  compute the power of the test for a range of $\sigma^2=(1.2,1.4,\ldots,2.4)$ and the results are reported in Table 1.

To complete the settings for the comparison, we took $B=1000$ and the number of simulations to record the power value was based on a Monte Carlo sample size of 1000. The sample size $n$ was $n=40$. The distance $d(x,y)$ employed is the absolute value between $x$ and $y$. While the observations are simple it is the distribution of $d(x,y)$ which matters and even if the $x$ are generated according to some highly complex and high--dimensional setting, the distribution of the $d(x,y)$ may yet be not unusual. 


\subsection{Real data analysis}

A real data analysis is now presented.  In the
paper~\cite{zairis2016}, the authors used hemagglutinin, an antigenic
surface glycoprotein, coding sequences in RNA from $1089$ flu samples
collected in the United States between 1993 and 2016.  These samples
were originally obtained from the GI--SAID EpiFlu database and then
processed.  Small phylogenetic trees with three, four, and five
leaves were constructed from randomly drawn sequential samples; i.e.,
representing samples from three, four, or five consecutive years, the
result was empirical distributions in the BHV metric space of
phylogenetic trees; see \cite{billera2001geometry}.  A phylogenetic
tree in $\BHV_n$ is an acyclic connected graph with a distinguished
vertex (the root) and $n$ vertices of degree $1$, the leaves, along
with labels in $\{1,n\}$ for the leaves and weights in ${\mathbb  R}_{\geq 0}$
for the edges that are not incident to a leaf.  For a given tree
topology, a tree is then completely specified by a vector in $R^{n-2}$
with all coordinates non--negative.  An illustration of a tree with 3 leaves is given in Fig~\ref{fig3}.

\begin{figure}[!htbp]
\begin{center}
\includegraphics[scale=0.4]{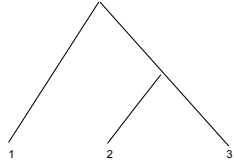}
\caption{Phylogenetic tree with three leaves}
\label{fig3}

\end{center}
\end{figure}

Note that switching labels 2 and 3 results in the same tree; we do not
take account of the embedding of the tree in the plane.  More general
illustrations of trees with the same and distinct topologies are given
in Fig~\ref{fig:tree_examples}.

\begin{figure}[!htbp]
\begin{center}
\includegraphics[scale=0.15]{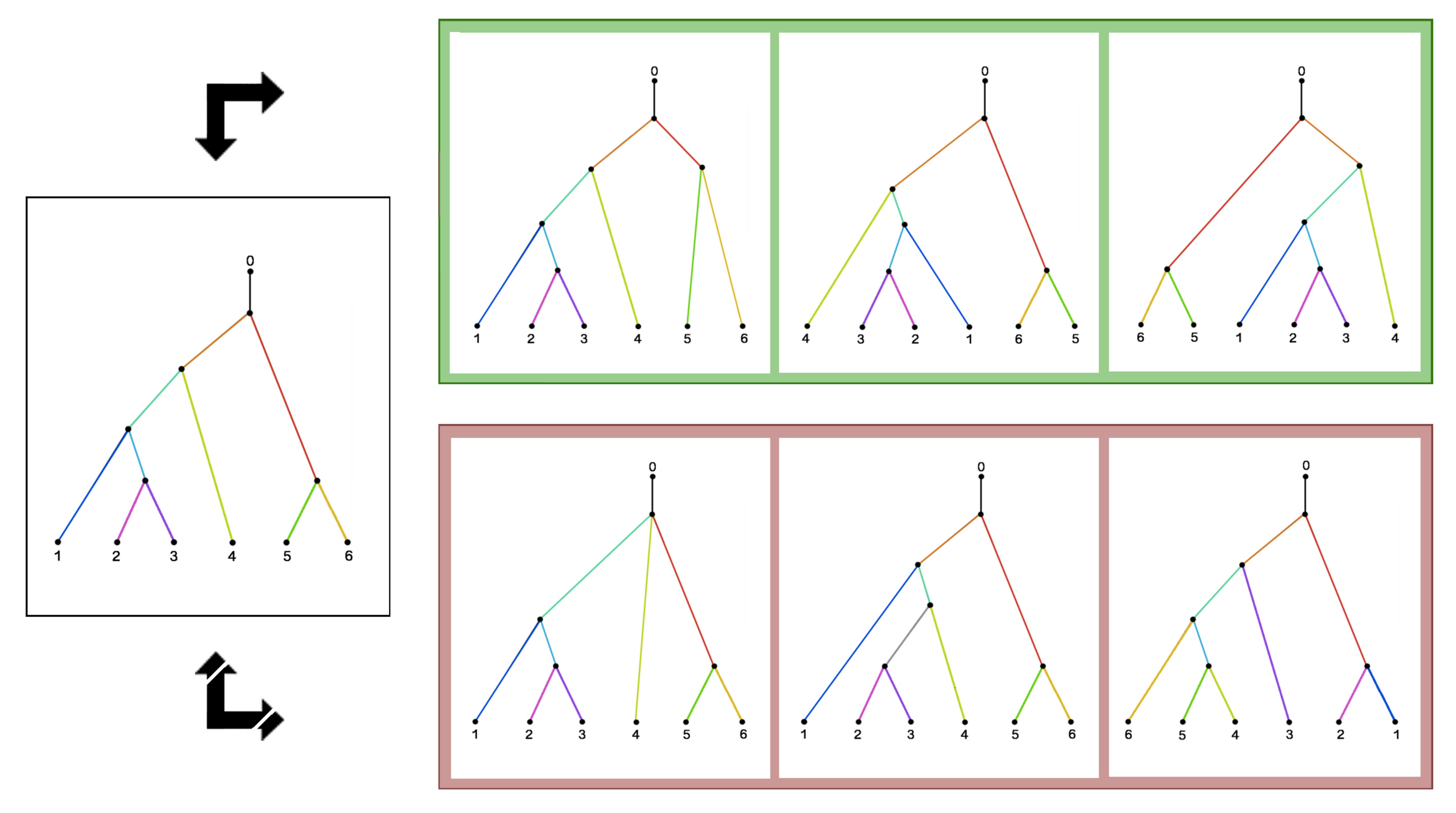}
\caption{Phylogenetic trees; top row is the same topology, bottom row
  has different topologies.}
\label{fig:tree_examples}

\end{center}
\end{figure}

To form the metric space $\BHV_n$,
we glue together these Euclidean orthants labelled by tree topologies
so that two orthants are adjacent if the two topologies coincide after
collapsing a single edge to $0$ in each one.  Such a transformation
is often referred to as a tree rotation.  

\begin{figure}[!htbp]
\begin{center}
\includegraphics[scale=0.2]{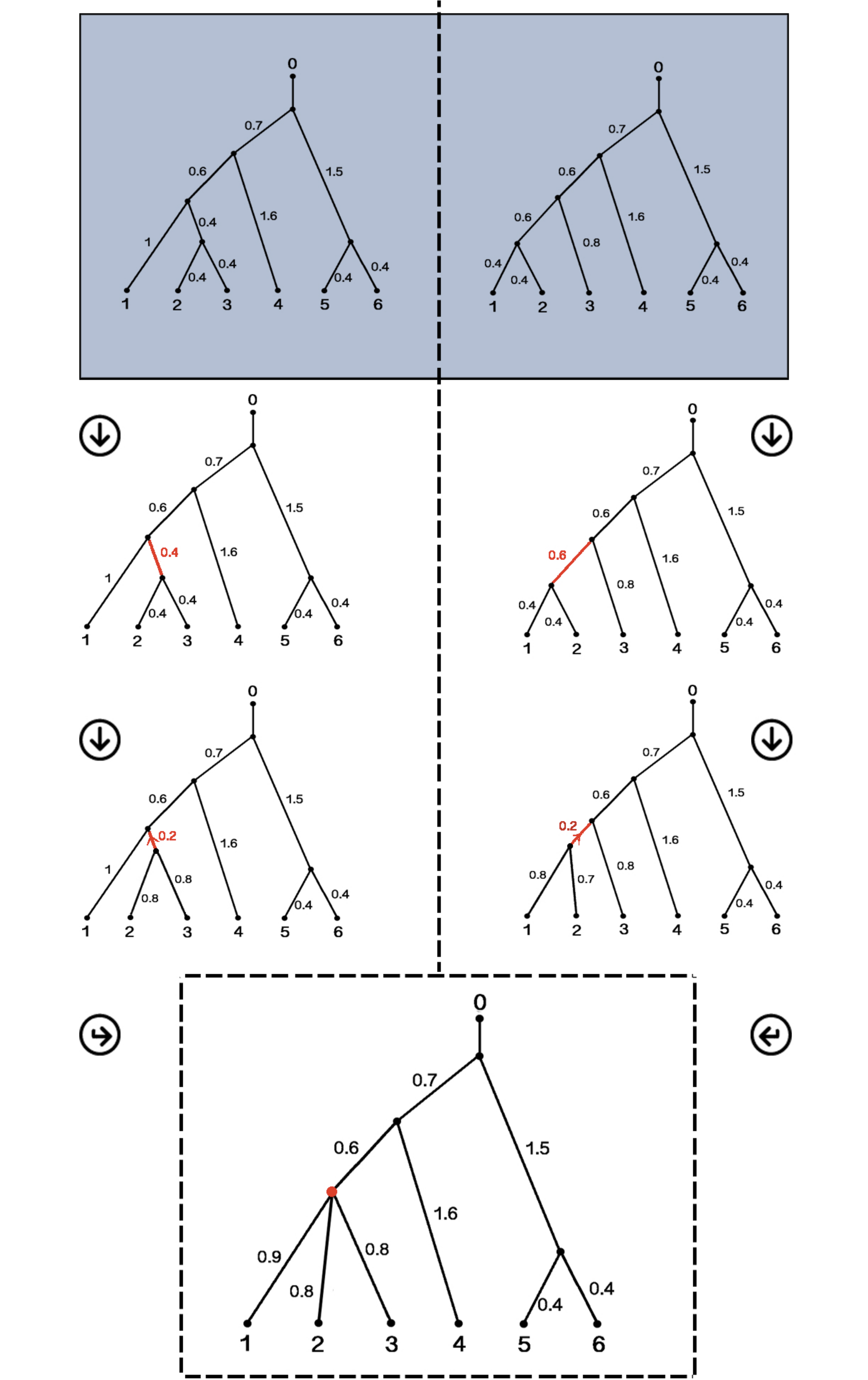}
\caption{Tree rotation}
\label{fig:tree_rot}

\end{center}
\end{figure}

The metric is now computed as the minimal piecewise linear path
between trees; see Fig~\ref{fig:tree_rot} for an example of a path
across orthants in tree space.  Note however that many paths go
through the ``cone point'', which is the tree with all internal edges
of length $0$.
As another illustration of the
distance, consider the tree from Fig~\ref{fig3} and the tree
obtained by switching 1 and 2 (which is a distinct tree).
To compute the distance we move the node for 2 and 3 in Fig~\ref{fig3}
to the root by collapsing the edge between it and the root, so that now
all the leaves hang from the same point, and then rearrange the leaves
appropriately.  The distance needed to move the nodes to the
root and then out again constitutes the overall distance. In
more complicated trees, the distance is harder to visualize;
nonetheless, there are (polynomial-time) computer algorithms to
compute it~\cite{Owen}.

In order to interpret the results of our test, we need to understand
the isometries of the space of phylogenetic trees.
In~\cite{Grindstaff}, it is shown that the only isometries $\BHV_n \to
\BHV_n$ are given by permuting labels; a permutation on the labels for
the leaves might transform the topology of a tree but clearly
preserves distances between trees.  As before, we do not see this as a
problem as we would not anticipate the two generating mechanisms for
the two sets of trees to be separated by such a permutation.

\begin{figure}[!htbp]
\begin{center}
\includegraphics[scale=0.55]{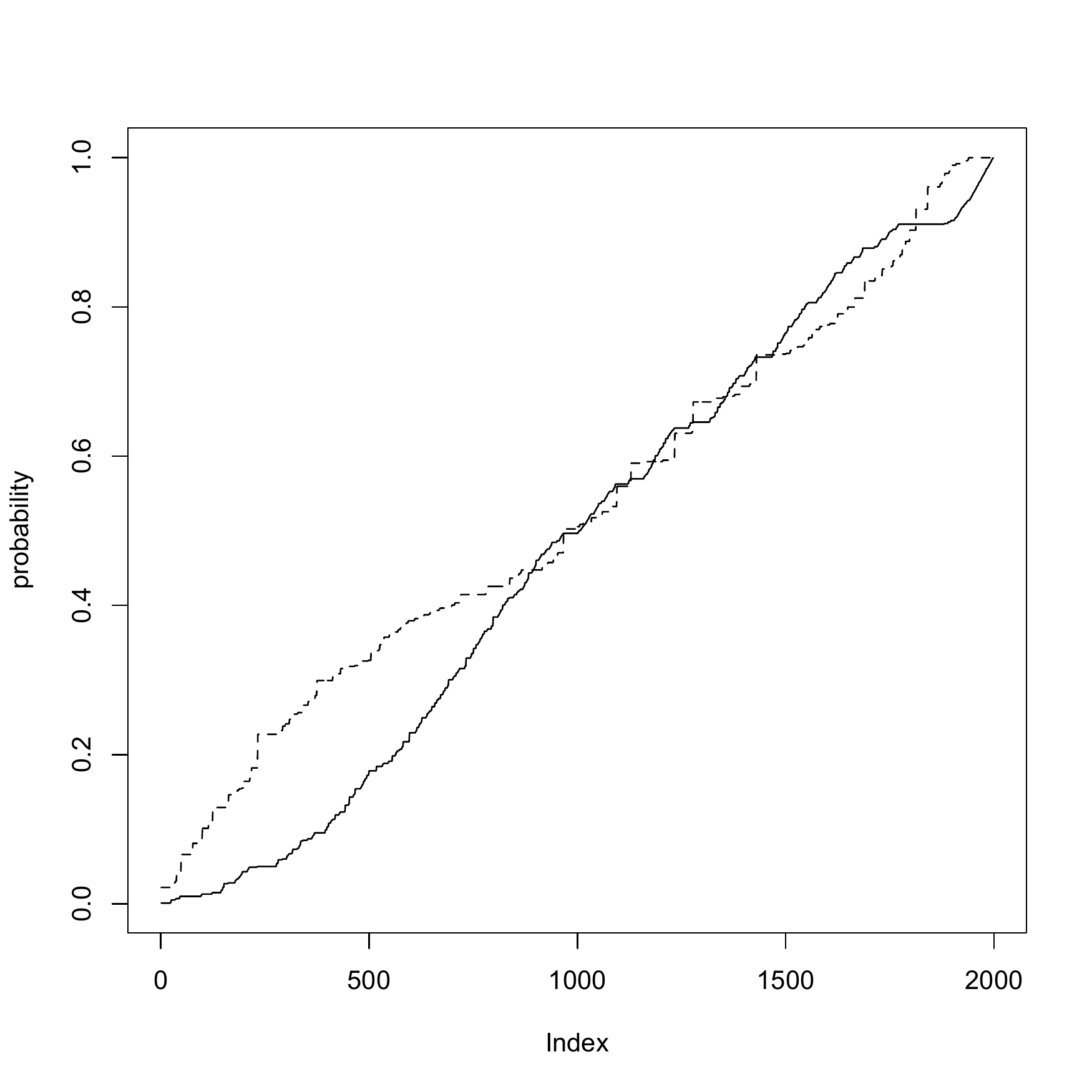}
\caption{Distribution functions of samples $d(X_{i_X},X_i)_{i\ne i_X}$ and $d(Y_{i_Y},Y_i)_{i\ne i_Y}$ }
\label{fig1}

\end{center}
\end{figure}

\begin{figure}[!htbp]
\begin{center}
\includegraphics[scale=0.55]{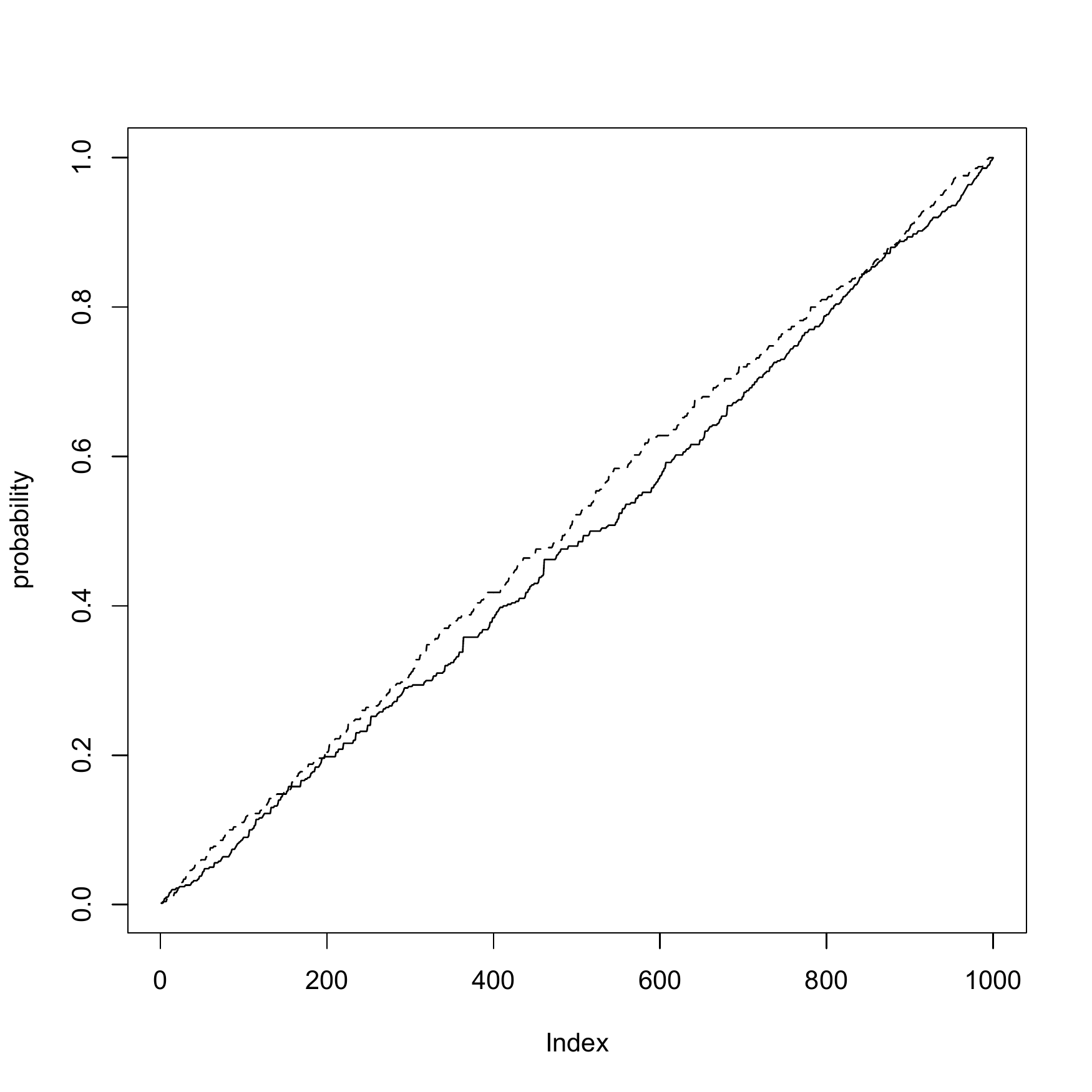}
\caption{Distribution functions of samples $d(X_{i_X},X_i)_{i\ne i_X}$, $1\leq i,i_X\leq n/2$, and $d(X_{i_X},X_i)_{i\ne i_X}$, $n/2 < i,i_X\leq n$}
\label{fig2}

\end{center}
\end{figure}

The purpose of looking at these distributions in the $\BHV$ metric
spaces was to use the metric to compare distributions for different
seasons; reliable ways to do this would lead to predictions about vaccine
design and effectiveness.  In~\cite{zairis2016}, a crude test
statistic involving the distance to the centroid turned out to be
surprisingly effective for predicting vaccine efficacy.  As a
representative experiment to validate our distribution test, we compared
empirical distributions computed from windows centered around the 1996
and the 2007 seasons, using data supplied by the authors
of~\cite{zairis2016}.  The metric distance was computed using software
implementing the fast algorithm of~\cite{Owen}.  We used 1000 samples from each distribution.

Both tests; i.e. the $T$ and $\widehat{\mathbb{D}}$ tests rejected, at a 95\% level of significance, the null hypothesis that the two distributions generating $X$ and $Y$ are the same. The Kolmogorov--Smirnov test is remarkably quick to implement.  However, for the two sample energy test there is an extremely time consuming computation of the critical value. To confirm the distribution of the distances are regular looking,  we plot the distribution functions of $d(X_{i_X},X_i)_{j\ne i_X}$ and $d(Y_{i_Y},Y_i)_{i\ne i_Y}$, with $i_X$ and $i_Y$ chosen as described previously. The two together are shown in Fig~\ref{fig1}.


Given the large number of data, we split the data matrix for the $X$, the 1996 flu data, into two independent blocks forming data matrix $M_{X}'$ based on real data $(d(X_{i},X_j))_{1\leq i,j \leq n/2}$ and data matrix $M_{X}''$ based on real data $(d(X_{i},X_j))_{n/2 < i,j\leq n}$. We now perform the  Kolmogorov--Smirnov test for these two matrices and in this case we accept the null hypothesis. We illustrate the two distribution functions in Fig~\ref{fig2}.

\section{Discussion}  

Hypothesis testing in the context of topological, non--Euclidean, data
poses unique challenges.  Even the problem of testing whether two sets
of data have the same source is difficult.  Our solution
to this is based on theory presented by Gromov which allows us to use
the Kolmogorov--Smirnov test on two carefully chosen sets of distances;
being those from the minimum row sums of the reconstruction matrix.  

On the other hand, an alternative test based on the energy distance is
also possible. Problems with this are that in the two--sample test one
has to compute the critical value using permutations and with large data sets this procedure can become
prohibitively time consuming.

We have demonstrated that our test is substantially faster and
moreover when there is a simple scale difference between the two
samples, it has superior power to that of the energy test. 

For further insights into the test, assume that $X$ and $Y$ have the same Fr\'echet mean $\theta$; then we can test for $\mu_X=\mu_Y$ using samples 
$d(\theta,X_i)$ and $d(\theta,Y_i).$
A suitable test would be the Kolmogorov--Smirnov test with $n$ points.

Ruling out any isometries, we can assume that 
$d(\theta,X)=_d d(\theta,Y)\quad\iff\quad X=_d Y.$
Not knowing $\theta$ we would use the empirical samples
$d(X_{i_X},X_i)_{i\ne i_X}$ and $d(Y_{i_Y},Y_i)_{i\ne i_Y}$
where $i_X$ minimizes over $j$ the sum
$\sum_{i\ne j} d(X_j,X_i),$
and similarly for $i_Y$. This can also be tested using the Kolmogorov--Smirnov test with now a reduction to $n-1$ points. This is of course the test we use.
The theory attributable to Gromov says this is also the test of choice when $\theta$ may not exist and the point of the use of empirical samples $X_{i_X}$ and $Y_{i_Y }$ replacing $\theta$   allows for the rejection of the hypothesis when the means, if they exist, are different. 

Another  interesting example also arises in the context of topological data
analysis; the space of ``barcodes'', which is the output of the
determination of persistent homology (e.g.,
see~\cite{weinberger2011persistent, ghrist2008}) forms a
metric space which is not Euclidean. 

Finally we mention that there is no reason why we can not handle $n$ and $m$ sample sizes from each measure; we simply chose to illustrate the test when sample sizes are equal.

\section*{Acknowledgements} 
The first author was supported in part by AFOSR grant FA9550-15-1-0302
and NIH grants 5U54CA193313 and GG010211-R01-HIV, and the third author
from NSF grants DMS 1506879 and 1612891.  The first author would like
to thank Raul Rabadan, Sakellarios Zairis, and Hossein Khiabanian for
helpful conversations.

\bibliographystyle{plain}
\bibliography{ref}

\newpage

\newpage

\end{document}